\title{Short proofs of strong normalization}
\author{Aleksander Wojdyga}
\institute{Faculty of Mathematics and Computer Science \\Nicolaus Copernicus University \\ Toru\'n \\
\email{awojdyga@mat.uni.torun.pl}
\and Institute of Computer Science, \\ Faculty of Electrical Engineering and Computer Science, \\ Lublin University of Technology, \\ Lublin}
\newcommand{\strzjn}[1]{\ensuremath{{#1}_1 \rightarrow \cdots \rightarrow {#1}_n}} 
\newcommand{\strzjx}[2]{\ensuremath{{#1}_1 \rightarrow \cdots \rightarrow {#1}_{#2}}} 
\newcommand{\para}[1]{\ensuremath{\langle #1 \rangle}}
\newcommand{\ra}{\rightarrow}
\newcommand{\thra}{\twoheadrightarrow}
\newcommand{\tr}[1]{{\ensuremath{\left \vert #1 \right \vert}}}
\newcommand{\trtyp}[2]{{\ensuremath{{\left \vert #1 \right \vert}^{\left \vert #2 \right \vert}}}}
\newcommand{\inl}{\ensuremath{\mathtt{in}_1}}
\newcommand{\inr}{\ensuremath{\mathtt{in}_2}}
\newcommand{\prl}{\ensuremath{\pi_1}}
\newcommand{\prr}{\ensuremath{\pi_2}}
\newcommand{\lambdaiakf}{\lambda_{\ra,\wedge,\vee,\bot}} 
\newcommand{\Fi}{\mathbf{F}_{\forall,\ra}}
\newcommand{\Fiakf}{\mathbf{F}_{\forall,\exists,\ra,\wedge,\vee,\bot}} 
\newcommand{\cud}[1]{\ensuremath{\epsilon_#1}}
\newcommand{\rabeta}{\rightarrow_\beta}
\newcommand{\cmt}{\rightsquigarrow}
\newcommand{\Ra}{\diamond}
\newcommand{\at}{\,@\,}
\newcommand{\trf}[1]{{\underline{#1}}}
\newcommand{\rhs}{\mathrm{RHS}}
\newcommand{\lhs}{\mathrm{LHS}}
\newcommand{\raa}{\twoheadrightarrow}
\begin{document}

\maketitle

\begin{abstract}
This paper presents simple, syntactic strong normalization proofs for the simply-typed 
$\lambda$-calculus and the polymorphic $\lambda$-calculus (system {\bfseries F}) with 
the full set of logical connectives, and all the  permutative reductions. The normalization 
proofs use translations of terms and types of $\lambdaiakf$ to terms and types of $\lambda_\ra$ 
and from $\Fiakf$ to $\Fi$.
\end{abstract}

\section{Introduction}

In this paper we consider the simply-typed and polymorphic lambda-calculus 
extended by type constructors corresponding to the usual logical
connectives,
namely conjunction, disjunction, absurdity and implication. In the
polymorphic
case we include both universal and existential quantification. In
addition, we
assume all the permutative conversions. 

Different proofs of strong normalization of several variants of these
calculi 
occur in the literature~cf.~\cite{davr03,joafm03,schh05,tatm05,tatm07}. 
It is however surprising that it is 
quite hard to find one covering the full set of connectives, applying to 
all the permutative conversions (in the polymorphic case none of the cited 
works does so) and given by a simple and straightforward argument. We can 
only repeat after J.Y. Girard: {\it I didn't find a proof really nice,
and taking little space}~\cite[p.~130]{blindspot}. 
For instance, many proofs, like these in~\cite{schh05,tatm05,tatm07} 
are based on the 
computability method, or (in the polymorphic case) candidates of
reducibility.
This requires re-doing each time the same argument, but in a more complex 
way, due to the increased complexity of the language. 

We believe that methodologically the most adequate approach is by
reducing 
the question of strong normalization of the extended systems to the known 
strong normalization of the base systems, involving only implication and
the 
universal quantifier. We propose two such proofs in what follows.

The first proof reduces the calculus $\lambdaiakf$ with connectives $\wedge,\vee,\ra,\bot$ to the calculus $\lambda_\ra$. 
Here we use the strong normalization of $\lambda_\ra$ with beta-eta-reductions. 
The proof is based on composing the ordinary reduction of classical
connectives to implication and absurdity with Ong's translation 
of the $\lambda\mu$-calculus to the ordinary $\lambda\eta$-calculus,
as described e.g.~in~\cite[Chapter 6]{sorm2006}. To our knowledge 
this is the most direct way of showing SN for system $\lambdaiakf$.

The above method does not however extend to the polymorphic case.
Indeed, the translation is strictly type-driven and requires
an \emph{a priori} knowledge of all types a given expression can obtain
by polymorphic instantiation. Also the well known 
definition of logical connectives in system {\bfseries F}:
\begin{displaymath}
 \sigma \wedge \tau \equiv \mathrm\forall t . (\sigma \ra \tau \ra t) \ra t \qquad
 \sigma \vee \tau \equiv \mathrm\forall t . (\sigma \ra t) \ra (\tau \ra t) \ra t
\end{displaymath}
is not adequate. The translation preserves beta-conversion, but
not the permutations. The solution, first used by de Groote~(\cite{groote99}, \cite{groote02}),
for first-order logic,
is a~CPS-translation. Our proof is similar to de Groote's but the 
version of CPS we use is based on Nakazawa and Tatsuta~\cite{NakazawaTatsuta08}.



\subsection{Definitions of relevant calculi}

We consider the calculi $\lambdaiakf$ and $\Fiakf$ in Church's style.
The type $\tau$ of a term $M$ is written informally in upper index as $M^\tau$. However, if it is clear from the context, types will be omitted for the sake of brevity and readability -- most right-hand sides of equations and reduction rules are written without types.

\subsubsection{The full simply-typed $\lambda$-calculus}
Types of $\lambdaiakf$ are built from multiple type constants; lowercase Greek letters are used to denote types.
\begin{definition}\rm \label{typys} Types of $\lambdaiakf$ 
 \begin{equation*}
  \sigma, \tau, \ldots ::= p, q, \ldots, \sigma \ra \tau, \sigma \wedge \tau, \sigma \vee \tau, \bot
 \end{equation*}
\end{definition}

\noindent
Syntax of terms of $\lambdaiakf$ can be divided in two groups: constructor terms and eliminator terms. Lowercase Latin letters denote variables, uppercase -- terms.
\begin{definition}\rm \label{termys} Terms of $\lambdaiakf$
 \begin{eqnarray*}
  M, N, \ldots &::=& \mbox{\em Variables} \nonumber\\
  & & x^\sigma, y^\tau, \ldots, \nonumber\\
  & & \mbox{\em Introduction} \nonumber\\
  & & (\lambda x^\sigma. N ^\tau)^{\sigma\ra\tau}, \para{M^\sigma, N^\tau}^{\sigma\wedge\tau}, (\inl{A^\sigma})^{\sigma\vee\tau}, (\inr{B^\tau})^{\sigma\vee\tau} \nonumber\\
  & & \mbox{\em Elimination} \nonumber\\
  & & (M^{\sigma\ra\tau} N^\sigma)^\tau, (P^{\sigma\wedge\tau}\prl)^\sigma, (P^{\sigma\wedge\tau}\prr)^\tau, (W^{\sigma\vee\tau}[x^\sigma.S^\delta,y^\tau.T^\delta])^\delta, \nonumber\\
  & & (A^\bot \cud{\tau})^\tau
 \end{eqnarray*}
\end{definition}

\noindent
In the above, the notation $\inl{A}$ and $\inr{A}$ represents the left and
right injection for the sum type, $\pi_1$ and $\pi_2$ are projections
and $W ^{\sigma\vee\tau}[x.S^\delta,y.T^\delta]$ stands 
for a~case statement. The epsilon represents the {\it ex falso\/}.

\subsubsection{Reductions}
The beta-reductions are written as $\rabeta$ and commutative reductions are denoted by $\rightsquigarrow$.
For any reduction $\ra$ transitive closure of this relation will be denoted as $\ra^+$ and transitive, reflexive closure as $\raa$.

\begin{definition}\rm \label{betas} $\beta$-reductions in $\lambdaiakf$
 \begin{align*}
  (\lambda x^\tau . M^\delta) A^\tau &\rabeta M[x:=A]^\delta \\
  \para{M^\sigma, N^\tau} \prl &\rabeta M^\sigma \\
  \para{M^\sigma, N^\tau} \prr &\rabeta N^\tau \\
  (\inl A)^{\sigma\vee\tau} [x^\sigma.S^\delta,y^\tau.T^\delta] &\rabeta S[x^\sigma:=A^\sigma]^\delta \\
  (\inr B)^{\sigma\vee\tau} [x^\sigma.S^\delta,y^\tau.T^\delta] &\rabeta S[y^\tau:=B^\tau]^\delta
 \end{align*}
\end{definition}

\begin{definition}\rm \label{cmts} Commutative reductions in $\lambdaiakf$
 \begin{align*}
  (A^\bot \cud{{\sigma\ra\tau}}) N^\sigma & \cmt A^\bot \cud{\tau} \\
  (A^\bot \cud{{\sigma\wedge\tau}}) \prl & \cmt A^\bot \cud{\sigma} \\
  (A^\bot \cud{{\sigma\wedge\tau}}) \prr & \cmt A^\bot \cud{\tau} \\
  (A^\bot \cud{{\sigma\vee\tau}}) [x^\sigma.S^\delta, y^\tau.T^\delta] & \cmt A^\bot \cud{\delta} \\
  (A^\bot \cud{\bot}) \cud{\sigma} & \cmt A^\bot \cud{\sigma} \\
  ((W^{\sigma\vee\tau} [x.S^{\alpha\ra\beta}, y.T^{\alpha\ra\beta}]) N^\alpha)^\beta 
   & \cmt W^{\sigma\vee\tau}[x.(SN)^\beta, y.(TN)^\beta] \label{elimapp} \\
  ((W^{\sigma\vee\tau} [x.S^{\alpha\wedge\beta}, y.T^{\alpha\wedge\beta}]) \prl)^\alpha
   & \cmt W^{\sigma\vee\tau}[x.(S\prl)^\alpha, y.(T\prl)^\alpha] \\
  ((W^{\sigma\vee\tau} [x.S^{\alpha\wedge\beta}, y.T^{\alpha\wedge\beta}]) \prr)^\beta
   & \cmt W^{\sigma\vee\tau}[x.(S\prr)^\beta, y.(T\prr)^\beta] \\
  \begin{split}
   (W^{\sigma\vee\tau} [x.S^{\alpha\vee\beta}, y.T^{\alpha\vee\beta}])
    [a^\alpha.A^\delta, & \, b^\beta.B^\delta] \cmt \\
     W^{\sigma\vee\tau} [x.S & [a.A^\delta, b.B^\delta], y.T[a.A^\delta, b.B^\delta]]
  \end{split} \\
 (W^{\sigma\vee\tau} [x.S^\bot, y.T^\bot]) \cud{\alpha}
  & \cmt W^{\sigma\vee\tau}[x.S\cud{\alpha}, y.T\cud{\alpha}]
\end{align*}
\end{definition}
Note that the above commutative reductions follow these two patterns:
\begin{eqnarray}
(W[x.S,y.T])E &\cmt& W[x.SE,y.TE], \label{cmts-pat1} \\
(A\cud{{}})E &\cmt& A\cud{{}}, \label{cmts-pat2}
\end{eqnarray}
where $E$ is an arbitrary eliminator. 
That is, $E$ is either a term $N$ or a~projection, or epsilon, or it has the
form $[x.S,y.T]$.

\subsubsection{The full polymorphic $\lambda$-calculus}

The full polymorphic $\lambda$-calculus extends the system of the previous section by existential and universal polymorphism. Terms of the calculus are all the terms of simply-typed $\lambda$ calculus plus universal and existential introduction and elimination. 

\begin{definition}\rm \label{typyf} Types of $\Fiakf$
 \begin{equation*}
  \sigma, \tau, \ldots ::= p, q, \ldots, \sigma \ra \tau, \sigma \wedge \tau, \sigma \vee \tau, \forall p \, \tau, \exists p \, \tau, \bot
 \end{equation*}
\end{definition}

In the definition below, notation $[M^{\tau[p := \sigma]}, \sigma]$ stands for introduction of type $\exists p \, \tau$ and $[x^\tau.N^\delta]$ is a eliminator for that type.

\begin{definition}\rm \label{termyf} Terms of $\Fiakf$
 \begin{equation*}
  \begin{split}
   M, N, \ldots ::=& \mbox{\em Variables} \\
    & x^\sigma, y^\tau, \ldots \\
    & \mbox{\em Introductions} \\
    & (\lambda x^\sigma. N ^\tau)^{\sigma\ra\tau}, \para{M^\sigma, N^\tau}^{\sigma\wedge\tau},
      (\inl{A^\sigma})^{\sigma\vee\tau}, (\inr{B^\tau})^{\sigma\vee\tau}, \\
    & [M^{\tau[p := \sigma]},\sigma]^{\exists p \, \tau}, (\mathrm\Lambda p M^\tau)^{\forall p \, \tau} \\
    & \mbox{\em Eliminations} \\
    & (M^{\sigma\ra\tau} N^\sigma)^\tau, (P^{\sigma\wedge\tau}\prl)^\sigma, (P^{\sigma\wedge\tau}\prr)^\tau,
      (W^{\sigma\vee\tau}[x^\sigma.S^\delta,y^\tau.T^\delta])^\delta, \\
    & (M^{\exists p \, \tau}[x^\tau.N^\delta])^\delta, (M^{\forall p \, \tau} \sigma)^{\tau[p := \sigma]} \\
    & (A^\bot \cud{\tau})^\tau
  \end{split}
 \end{equation*}
\end{definition}

\noindent
The~$\beta$-reductions and commutative reductions in this system are as follows.
\begin{definition}\rm
\label{betaf} The $\beta$-reductions in $\Fiakf$ are as in Definition \ref{betas} and in
addition
 \begin{align}
  [M^{\tau[p := \sigma]}, \sigma][x^\tau.N^\delta] & \rabeta (N[p:=\sigma][x := M])^\delta \label{elimegz} \\
  (\mathrm\Lambda p M^\tau) \sigma & \rabeta M{[p := \sigma]}
 \end{align}
\end{definition}

\noindent 
The total number of commutative reductions reaches 21. The patterns mentioned in Rules (\ref{cmts-pat1}) and (\ref{cmts-pat2}) are extended by the additional one:
\begin{align}
 (M[x.P])E \cmt M[x.PE], \label{cmts-pat3}
\end{align}
where $E$ can also be of the form of existential ($[y.R]$) or universal ($\sigma$) eliminator.
\begin{definition}\rm \label{cmtf} Additional commutative reductions in $\Fiakf$. \smallskip
{\allowdisplaybreaks

\noindent 
\mbox{Let $\delta$ abbreviate $\forall p \, \alpha$ in rules below.}
 \begin{align}
  (W^{\sigma\vee\tau} [x^\sigma.S^\delta,y^\tau.T^\delta]) \gamma \cmt
   & W[x.(S\gamma)^{\alpha[p:=\gamma]}, y.(T\gamma)^{\alpha[p:=\gamma]}] \\
  (A^\bot \cud{\delta}) \gamma \cmt & A^\bot \cud{{\alpha[p:=\gamma]}} \\
  (M^{\exists p \, \tau} [x^\tau.P^\delta]) \gamma \cmt
   & M^{\exists p \, \tau} [x.(P\gamma)^{{\alpha[p:=\gamma]}}] \\
 \end{align}
}
\noindent 
 \mbox{In the following rules, $\delta$ abbreviates $\exists p \, \alpha$.}
{\allowdisplaybreaks
 \begin{align}
  (W^{\sigma\vee\tau}[x^\sigma.S^\delta, y^\tau.T^\delta]) [a^\alpha.N^\xi] \cmt
   & W^{\sigma\vee\tau}[x.(S[a.N])^\xi, y.(T[a.N])^\xi] \\
  (A^\bot \cud{\delta}) [a^\alpha.N^\xi] \cmt &  A^\bot \cud{\xi} \\
  (M^{\exists p \, \tau} [y^\tau.P^\delta]) [a^\alpha.N^\xi] \cmt
   & M^{\exists p \, \tau} [y.(P [a . N])^\xi] \label{Eelimelim} \\
  A^\delta[x^\alpha. N ^{\sigma\ra\tau}] P^\sigma \cmt & A[x.(NP)^\tau] \\
  A^\delta[x^\alpha. N ^{\sigma\wedge\tau}] \prl \cmt & A[x . (N\prl)^\sigma] \\
  A^\delta[x^\alpha. N ^{\sigma\wedge\tau}] \prr \cmt & A[x . (N\prr)^\tau] \\
  A^\delta[x^\alpha. N ^{\sigma\vee\tau}] [y^\sigma.S^\delta,z^\tau.T^\delta] \cmt
   & A[x.(N[y.S,z.T])^\delta] \\
  A^\delta[x^\alpha. N ^\bot] \cud{\sigma} \cmt & A[x.(N \cud{\sigma})^\sigma] \label{Aelimcud}
 \end{align}
}
\end{definition}

\section{The translation for simple types}
A type $\tau$ of the $\lambdaiakf$ calculus is translated
 to a type $\tr{\tau}$ of $\lambda_\ra$
calculus, a~term $M$ is translated to a term $\tr{M}$.

\begin{definition}\rm \label{trtyp} Translation of types.
\begin{align*}
\tr{\alpha} &= \bot, \; \mbox{{\textnormal for all type constants} $\alpha = \bot, p, q, \ldots$} \\
\tr{\sigma \ra \tau} &= \tr{\sigma} \ra \tr{\tau} \\
\tr{\sigma \wedge \tau} &= (\tr{\sigma} \ra \tr{\tau} \ra \bot) \ra \bot \\
\tr{\sigma \vee \tau} &= (\tr{\sigma} \ra \bot) \ra (\tr{\tau} \ra \bot) \ra \bot
\end{align*}
\end{definition}

\begin{example}
Let $\tau = p \ra q \ra (p \wedge q)$. Then 

\hfil $\tr{\tau} = \bot \ra \bot \ra (\bot \ra \bot \ra \bot) \ra \bot$.
\end{example}

\begin{definition}\rm \label{trterm} (Translation of terms)
It is assumed below that types $\tr{\sigma}, \tr{\tau}$ and $\tr{\delta}$ are
as follows:
 $\tr{\sigma} = \strzjn{\sigma} \ra \bot$,
 $\tr{\tau} = \strzjx{\tau}{m} \ra \bot$ and
 $\tr{\delta} = \strzjx{\delta}{k} \ra \bot$.
\begin{eqnarray}
\tr{x^\sigma} &=& x^\tr{\sigma} \\
\tr{\lambda x^\tau . M^\sigma} &=& \lambda x^\tr{\tau} . \trtyp{M}{\sigma} \\
\tr{\para{M, N}^{\sigma \wedge \tau}} &=& \lambda z^{\tr{\sigma} \ra \tr{\tau} \ra \bot}.z
\trtyp{M}{\sigma} \trtyp{N}{\tau} \label{para} \\
\tr{{(\inl A)}^{\sigma \vee \tau }} &=& \lambda x^{\tr{\sigma} \ra \bot}.\lambda y^{\tr{\tau} \ra \bot}.x\trtyp{A}{\sigma} \\
\tr{{(\inr B)}^{\sigma \vee \tau}} &=& \lambda x^{\tr{\sigma} \ra \bot}.\lambda y^{\tr{\tau} \ra \bot}.x\trtyp{B}{\tau} \\
\tr{(M^{\sigma \ra \tau} N^{\sigma})} &=& (\tr{M}^{\tr{\sigma} \ra \tr{\tau}} \trtyp{N}{\sigma}) \\
\tr{(P^{\sigma \wedge \tau}) \prl} &=& \lambda x_1^{\sigma_1} \ldots \lambda x_n^{\sigma_n} . 
 \trtyp{P}{\sigma \wedge \tau} \nonumber\\ 
 & & (\lambda x^\tr{\sigma}. \lambda y^\tr{\tau}. (x x_1 \ldots x_n)^\bot) \label{rzut} \\
\tr{(P^{\sigma \wedge \tau}) \prr} &=& \lambda x_1^{\tau_1} \ldots \lambda x_m^{\tau_m} . 
 \trtyp{P}{\sigma \wedge \tau} \nonumber\\ 
 & & (\lambda x^\tr{\sigma}. \lambda y^\tr{\tau} . (y x_1 \ldots x_m)^\bot) \\
\tr{A^{\sigma \vee \tau}[x.S^\delta, y.T^\delta]} &=& \lambda x_1^{\delta_1} \ldots \lambda x_k^{\delta_k}.   
 \tr{A}^{(\tr{\sigma} \ra \bot) \ra (\tr{\tau} \ra \bot) \ra \bot} \nonumber\\
 & & (\lambda x^{\tr{\sigma}}.\trtyp{S}{\delta} x_1 \ldots x_k) 
 (\lambda y^{\tr{\tau}}.\trtyp{T}{\delta} x_1 \ldots x_k) \label{elimcase} \\
\tr{M^\bot \cud\sigma} &=& \lambda x_1^{\sigma_1} \ldots \lambda x_{n-1}^{\sigma_{n-1}} . \tr{M}^\bot
\end{eqnarray}
\end{definition}

\begin{lemma} [\emph{Soundness}]
\label{trsoundthm} If a term $M$ has type $\delta$, then
$\tr{M}$ has type $\tr{\delta}$.
\end{lemma}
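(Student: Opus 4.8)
The plan is to prove the statement by structural induction on the term $M$, equivalently on its typing derivation. Since the disjunction eliminator $A[x.S,y.T]$, the existential eliminator and the abstractions all bind variables, I would first strengthen the statement to carry a context: writing $\Gamma \vdash M : \delta$ for the source judgement and letting $\tr{\Gamma}$ translate each declaration $x:\sigma$ to $x:\tr{\sigma}$, the claim becomes $\tr{\Gamma} \vdash \tr{M} : \tr{\delta}$ in $\lambda_\ra$. This strengthening is exactly what makes the induction hypothesis applicable to the bodies $S,T$ of a case statement, which are typed in the extended contexts $\Gamma, x:\sigma$ and $\Gamma, y:\tau$.

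Before the induction I would record the one structural fact about the type translation on which every eliminator case rests: for every source type $\tau$ the translated type $\tr{\tau}$ has the form $\tau_1 \ra \cdots \ra \tau_m \ra \bot$ for some $m \geq 0$; that is, it always ``returns $\bot$.'' This is immediate by induction on $\tau$ from Definition \ref{trtyp}: type constants and $\bot$ give $m=0$, an arrow $\sigma \ra \tau$ prepends $\tr{\sigma}$ to the decomposition of $\tr{\tau}$, and the $\wedge$- and $\vee$-cases are already displayed in the required shape. This is what justifies the abbreviations $\tr{\sigma} = \strzjn{\sigma} \ra \bot$, $\tr{\tau} = \strzjx{\tau}{m} \ra \bot$ and $\tr{\delta} = \strzjx{\delta}{k} \ra \bot$ fixed at the head of Definition \ref{trterm}.

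With this in hand I would run through the term constructors. The variable case is definitional. For the introductions---$\lambda$-abstraction, pairing $\tr{\para{M,N}}$, both injections---and for application, the verification is direct: apply the induction hypothesis to the immediate subterms and read off that the displayed term inhabits the displayed translated type, using only $\tr{\sigma \ra \tau} = \tr{\sigma} \ra \tr{\tau}$ and the explicit shapes of $\tr{\sigma \wedge \tau}$ and $\tr{\sigma \vee \tau}$. For example, in the pairing clause \eqref{para} the bound $z$ is given type $\tr{\sigma} \ra \tr{\tau} \ra \bot$ precisely so that $z\,\trtyp{M}{\sigma}\,\trtyp{N}{\tau}$ has type $\bot$, whence the whole $\lambda z$-term has type $(\tr{\sigma} \ra \tr{\tau} \ra \bot) \ra \bot = \tr{\sigma \wedge \tau}$.

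The real work---and the step I expect to be the main obstacle---is the eliminator clauses for $\wedge$, $\vee$ and $\bot$: the projections \eqref{rzut}, the case clause \eqref{elimcase}, and the \emph{ex falso} clause. Here the translated term is an $\eta$-like expansion $\lambda x_1 \cdots \lambda x_k.\,(\cdots)$ whose leading abstractions must consume exactly the arrow prefix $\delta_1,\ldots,\delta_k$ of the target $\tr{\delta} = \strzjx{\delta}{k} \ra \bot$ furnished by the preliminary fact, leaving a body of type $\bot$. In each clause I would check two things: that the continuation passed to $\tr{A}$ (respectively to $\trtyp{P}{\sigma \wedge \tau}$) has exactly the domain type demanded by $\tr{\sigma \vee \tau}$ or $\tr{\sigma \wedge \tau}$---e.g. in \eqref{elimcase} its two arguments must have types $\tr{\sigma} \ra \bot$ and $\tr{\tau} \ra \bot$---and that, inside that continuation, applying $\trtyp{S}{\delta}$ (typed by the induction hypothesis in the context extended by $x:\tr{\sigma}$) to $x_1,\ldots,x_k$ yields $\bot$. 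The uniform bookkeeping of matching the number of bound variables $x_1,\ldots$ against the arrow-length of the relevant translated type, consistently across $\prl$, $\prr$, the case split and \emph{ex falso}, is the only delicate point; once the preliminary decomposition fact is available it reduces to a mechanical clause-by-clause check.
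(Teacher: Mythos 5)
Your proof is correct and is exactly the routine structural induction (with the context-strengthened statement and the observation that every translated type has the shape $\strzjn{\tau}\ra\bot$) that the paper leaves implicit: its own proof of this lemma is just the word ``Obvious.'' You have simply supplied the details the author chose to omit, so there is nothing to reconcile.
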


\begin{proof} Obvious.
\qed
\end{proof}

\begin{lemma} \label{betaeta} If $R \ra R'$, then $\tr{R} \ra_{\beta\eta}^{+} \tr{R'}$.
\end{lemma}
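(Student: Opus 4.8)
The plan is to prove this as a simulation statement, by induction on the derivation of the one-step reduction $R \ra R'$. Since $\ra$ is the compatible closure of the rules in Definitions \ref{betas} and \ref{cmts}, there are two kinds of cases: the contracted redex sits at the root, or it sits inside a larger term. The auxiliary fact that drives everything is a substitution lemma, $\tr{M[x:=A]} = \tr{M}\,[x := \tr{A}]$, which I would establish first by induction on $M$. The only thing to verify there is that the fresh variables introduced by the clauses of Definition \ref{trterm} (the $z$, the $x$, $y$, and the $x_i$) can always be chosen disjoint from $x$ and from the free variables of $A$, so that they do not interfere with the substitution.

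The congruence cases are then essentially free. The translation of Definition \ref{trterm} is compositional, and each immediate subterm occurs in the defining clause as a genuine subterm in operator or operand position. Hence if the induction hypothesis gives $\tr{R} \ra_{\beta\eta}^{+} \tr{R'}$ for the subterm, then, because $\ra_{\beta\eta}$ is a congruence (closed under application and abstraction), the very same sequence lifts to the enclosing term. So the entire content of the lemma lies in the root redexes.

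For the $\beta$-rules of Definition \ref{betas} I would dispatch the $\lambda$-application rule by a single $\beta$-step followed by the substitution lemma. The three elimination-of-introduction rules ($\prl$, $\prr$, and the case on an injection) are where $\eta$ becomes indispensable: unfolding (\ref{para}), (\ref{rzut}) and (\ref{elimcase}) and contracting the leading $\beta$-redexes leaves a wrapper of shape $\lambda x_1 \ldots \lambda x_n.(\tr{M}\, x_1 \cdots x_n)$, and exactly $n$ $\eta$-steps collapse it to $\tr{M}$ (respectively to $\tr{S[x:=A]}$), using that $\tr{M}$ has the arrow type $\sigma_1 \ra \cdots \ra \sigma_n \ra \bot$. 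This yields $\tr{R} \ra_{\beta\eta}^{+} \tr{R'}$ with at least one step, as required.

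The commutative rules are the crux, and I expect them to be the main obstacle. They all match the two patterns (\ref{cmts-pat1}) and (\ref{cmts-pat2}), so I would prove one schematic statement per pattern and read off the individual rules. For a rule of the first pattern, the translated eliminator applied to the translated case via (\ref{elimcase}) contracts the administrative abstractions produced by (\ref{elimcase}) and thereby pushes the eliminator into both branches, which is precisely the shape of $\tr{R'}$; for the second pattern the continuation built from the eliminator is simply discarded, since $\tr{A}$ has type $\bot$. The delicate point, which must be checked rule by rule over all eliminator shapes (a term $N$, a projection, a case, or an $\epsilon$), is \emph{directedness}: when the eliminator is itself an eliminator whose translation carries its own leading abstractions, the translated right-hand side reintroduces administrative redexes, so one must count the abstractions on both sides carefully to confirm that contracting the redex of $\tr{R}$ lands on $\tr{R'}$ -- or, where it lands only on a common reduct, that at least one step is nonetheless taken, keeping the simulation strict. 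Getting this bookkeeping uniformly correct, together with the $\eta$-steps that clean up any surviving wrappers, is the part that needs the most care.
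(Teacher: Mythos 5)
Your overall strategy coincides with the paper's: a case analysis on the root redexes of Definitions \ref{betas} and \ref{cmts}, with the congruence cases handled by compositionality of Definition \ref{trterm}, a substitution lemma $\tr{M[x:=A]}=\tr{M}[x:=\tr{A}]$ for the $\beta$-cases, $\eta$-steps to strip the wrapper $\lambda x_1\ldots\lambda x_n.(\tr{M}x_1\cdots x_n)$ in the projection and injection cases, and contraction of administrative redexes for the commutations. The paper works out exactly the two cases you single out (the instance (\ref{rzut}) and the application instance of pattern (\ref{cmts-pat1})) and dismisses the rest as ``similar,'' so in terms of what is explicitly verified you are, if anything, more careful.

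The one point where your plan does not deliver the lemma as stated is the fallback you offer for the delicate commutative cases: ``lands only on a common reduct, but at least one step is taken.'' Having $\tr{R}\ra_{\beta\eta}^{+}X$ and $\tr{R'}\raa_{\beta\eta}X$ is strictly weaker than $\tr{R}\ra_{\beta\eta}^{+}\tr{R'}$, and it does not let you concatenate the simulated steps into an infinite reduction, so it cannot simply replace the conclusion. Moreover the situation you fear genuinely occurs. For $(W[x.S,y.T])\prl\cmt W[x.S\prl,y.T\prl]$ with a non-atomic result type, the reduct of $\tr{R}$ has the continuation $\lambda u\lambda v.\,u\,a_1\cdots a_k$ already substituted into both branches, whereas $\tr{R'}$ still contains the administrative redexes $\tr{S\prl}\,a_1\cdots a_k$ created by clause (\ref{elimcase}); since $\beta\eta$-reduction cannot recreate those leading abstractions, $\tr{R}$ does not reduce to $\tr{R'}$ (taking $W,S,T$ to be variables makes the common reduct normal while $\tr{R'}$ is not). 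The same happens whenever the eliminator $E$ in pattern (\ref{cmts-pat1}) is a projection, a case, or an $\epsilon$ --- only $E=N$ gives the exact syntactic identity, and that is precisely the single commutative case the paper exhibits. So you have correctly located the crux, but it must be resolved by changing something rather than by careful counting: either arrange the translation of the branches so that the variables $x_1,\ldots,x_k$ are pushed past the leading abstractions (keeping translated terms administratively reduced), or weaken the lemma for commutations to a common-reduct statement and dispose of the permutations by a separate decreasing measure, as the paper itself does in the polymorphic case (Lemma \ref{cmtsteps}).
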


\begin{proof}
 The proof proceeds by cases on the definition of $\rabeta$ and $\cmt$. Two example reductions
 will be elaborated here.\smallskip 

 \noindent(\ref{rzut})
Let $R=\para{M^\sigma, N^\tau} \prl$ and 
$R \rabeta R' = M$, where \mbox{$\tr{\sigma} = \strzjn{\sigma} \ra
\bot$}.
%
 \begin{align*}
  \begin{split}
   \tr{R} & = \tr{\para{M, N}^{\sigma \wedge \tau} \prl} \\
          & = \lambda a_1^{\sigma_1} \ldots \lambda a_n^{\sigma_n} . 
          \tr{\para{M, N}}^\tr{\sigma \wedge \tau}
          (\lambda x^\tr{\sigma}. \lambda y^\tr{\tau}. (x a_1 \ldots a_n)^\bot) \\
          & = \lambda \vec{a} . 
          (\lambda z^{\tr{\sigma}\ra\tr{\tau}\ra\bot}.z \tr{M} \tr{N}) 
          (\lambda x^\tr{\sigma} \lambda y^\tr{\tau} . (x \vec{a})^\bot) \\
          & \rabeta \lambda \vec{a} . ((\lambda x^\tr{\sigma} \lambda y^\tr{\tau} . 
           (x \vec{a})^\bot) \tr{M} \tr{N}) \\
          & \rabeta \lambda \vec{a} . (\lambda y^\tr{\tau} . \tr{M} \vec{a})\tr{N} 
            \rabeta \lambda \vec{a} . \tr{M} \vec{a} \ra_\eta^{+} \tr{M} \\
          & = \tr{R'}
  \end{split}
 \end{align*}

 \noindent
(\ref{elimcase})
Let $R=(W^{\sigma\vee\tau}[x.S^{\alpha\ra\beta},y.T^{\alpha\ra\beta}]) 
N^\alpha$ and let 
$R' = W^{\sigma\vee\tau}[x.(SN)^\beta, y.(TN)^\beta]$. Then
$R \cmt R'$, according to 
(\ref{elimcase}). 
Assuming $\tr{\beta} = \strzjn{\beta}\ra\bot$, we have
 \begin{align*}
  \begin{split}
   \tr{R} & = (\lambda a^\tr{\alpha} b_1^{\beta_1}\ldots b_n^{\beta_n}.\tr{W}
     (\lambda x^\tr\sigma.\tr{S}^{\tr\alpha\ra\tr\beta} a\vec{b})
     (\lambda y^\tr\tau. \tr{T}^{\tr\alpha\ra\tr\beta}a\vec{b})) \tr{N}^\tr{\alpha} \\
   & \rabeta \lambda b_1\ldots b_n.\tr{W}
     (\lambda x^\tr\sigma.\tr{S}\tr{N}\vec{b})
     (\lambda y^\tr\tau. \tr{T}\tr{N}\vec{b})) \\
   & = \tr{R'}
  \end{split}
 \end{align*}

\noindent  Other cases are similar.
\qed
\end{proof}

\begin{theorem} The calculus $\lambdaiakf$ is strongly normalizing.
\end{theorem}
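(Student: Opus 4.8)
The plan is to obtain strong normalization of $\lambdaiakf$ from the strong normalization of $\lambda_\ra$ under $\beta\eta$-reduction (the base result recalled in the introduction), transported backwards along the translation $\tr{\cdot}$ by means of Lemma~\ref{betaeta}. The argument is a reflection of infinite descent: any hypothetical infinite reduction in the source system is mapped to an infinite $\beta\eta$-reduction in the target, which cannot exist. So essentially all the real work has already been discharged in Lemma~\ref{betaeta}, and the theorem itself needs only a short contradiction argument on top of the two lemmas.

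Concretely, I would proceed by contradiction. Suppose $\lambdaiakf$ is not strongly normalizing and fix an infinite reduction sequence $M_0 \ra M_1 \ra M_2 \ra \cdots$, where each step is either a $\rabeta$-step or a $\cmt$-step. By the Soundness Lemma, $M_0$ has some type $\delta$, so $\tr{M_0}$ is a well-typed term of $\lambda_\ra$ of type $\tr{\delta}$, and likewise every $\tr{M_i}$ is well typed. Applying Lemma~\ref{betaeta} to each step $M_i \ra M_{i+1}$ yields $\tr{M_i} \ra_{\beta\eta}^{+} \tr{M_{i+1}}$, that is, at least one genuine $\beta\eta$-step. Concatenating these finite but nonempty segments produces an infinite $\beta\eta$-reduction sequence issuing from $\tr{M_0}$, contradicting the strong normalization of $\lambda_\ra$. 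Hence no infinite source sequence can exist, and $\lambdaiakf$ is strongly normalizing.

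The only delicate point, and the reason the preceding development was arranged exactly as it was, is the strictness ($\ra^+$ rather than $\raa$) in Lemma~\ref{betaeta}: each source step must be matched by \emph{at least one} target step, so that an infinite source sequence cannot collapse to a finite or eventually stationary image. This is precisely why the translations of the permutative reductions were engineered to expose $\beta$-redexes on their right-hand sides, as in the worked cases (\ref{rzut}) and (\ref{elimcase}), and why $\eta$-reduction is admitted in the base calculus. Granting Lemma~\ref{betaeta}, I expect no genuine obstacle beyond correctly invoking the Soundness Lemma, Lemma~\ref{betaeta}, and the base strong normalization result; in particular, the theorem requires no further case analysis of its own.
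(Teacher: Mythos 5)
Your proposal is correct and follows essentially the same route as the paper: assume an infinite reduction sequence, push each step through Lemma~\ref{betaeta} to get at least one $\beta\eta$-step in $\lambda_\ra$, and contradict the strong normalization of the base calculus. Your explicit remark on the necessity of the strict $\ra^+$ in Lemma~\ref{betaeta} (and that both $\rabeta$- and $\cmt$-steps are covered) is exactly the right point of care, and if anything is stated more carefully than in the paper's own two-line argument.
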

\begin{proof}
 Suppose, by contradiction, that $M^\tau$ admits an
infinite $\beta$-reduction \[ M^\tau = M_0^\tau \rabeta M_1^\tau \rabeta M_2^\tau \rabeta \cdots \]
 By Theorem \ref{betaeta} we have an infinite reduction in $\lambda_\ra$ \[ \tr{M^\tau} = \tr{M_0} \thra_{\beta\eta}^{+} \tr{M_1} \thra_{\beta\eta}^{+} \tr{M_2} \thra_{\beta\eta}^{+} \cdots \] This contradicts the SN property of $\lambda_\ra$ 
 \qed
\end{proof}

\section{Translation for polymorphic types}\label{ksuFGUFFO}


As we mentioned in the introduction, the translations in 
Section~\ref{ksuFGUFFO}
are not adequate for the polymorphic case and therefore we apply a call-by-name CPS translation. 
In general, a type $\tau$ is translated to $\trf{\tau} = (\tau^\ast\ra\bot)\ra\bot$. 
This translation, unlike 
the one for simple types, does not unify type constants. The helper translation ${}^\ast$ is given below.

\begin{definition}\rm Helper translation ${}^\ast$.
 \begin{align*}
  \alpha^\ast & = \alpha, \; \mbox{\textnormal {for all type constants} $\alpha = \bot, p, q, \ldots$} \\
  (\alpha \ra \beta)^\ast & = \trf{\alpha} \ra \trf{\beta} \\
  (\alpha \wedge \beta)^\ast & = (\trf{\alpha} \ra \trf{\beta} \ra \bot ) \ra \bot \\
  (\alpha \vee \beta)^\ast & = (\trf{\alpha} \ra \bot ) \ra (\trf{\beta} \ra \bot) \ra \bot \\
  (\forall p \, \tau)^\ast & = \forall p \, \trf{\tau} \\
  (\exists p \, \tau)^\ast & = (\forall p (\trf{\tau} \ra \bot))\ra\bot
 \end{align*}
\end{definition}

\noindent A 
term $M^\tau$ is translated to the term $\trf{M} = \lambda k^{\tau^\ast\ra\bot}. (M \Ra k)$. 
To achieve that, two helper translations are needed: $\Ra$ and $\at$.
The term $K$ in the definition below is of type $\tau^\ast\ra\bot$.
The term $M \Ra K$ is always of type $\bot$.

\begin{definition}\rm Helper translation $\Ra$
 \begin{align}
  x^\tau \Ra K & = x K \\
  \lambda x^\sigma . N^\rho \Ra K & = K (\lambda x^\trf{\sigma} . \trf{N}) \\
  \para{N_1^{\tau_1}, N_2^{\tau_2}} \Ra K
   & = K (\lambda p\,^{\trf{\tau_1}\ra\trf{\tau_2}\ra\bot} . p \trf{N_1} \, \trf{N_2}) \\
  (\inl A)^{\tau_1 \vee \tau_2} \Ra K & = K (\lambda a\,^{\trf{\tau_1}\ra\bot} b\,^{\trf{\tau_2}\ra\bot} . a \trf{A}) \\
  (\inr B)^{\tau_1 \vee \tau_2} \Ra K & = K (\lambda a\,^{\trf{\tau_1}\ra\bot} b\,^{\trf{\tau_2}\ra\bot} . b \trf{B}) \\
  \mathrm\Lambda p \, N^\rho \Ra K & = K (\mathrm\Lambda p. \trf{N}) \\
  [N^{\rho[p:=\sigma]}, \sigma] \Ra K 
   & = K (\lambda u^{\forall p (\trf{\rho}\ra\bot)} . u \,\trf{\sigma} \, \trf{N}) \\
   NE \Ra K & = N \Ra (E \at K) \label{elim}
 \end{align}
 In (\ref{elim}) the symbol $E$ stands for an arbitrary eliminator. That is, 
 $E$ is one of the expressions
 $\lbrace R^\sigma, \prl, \prr, [x^{\tau_1}.S^\delta, y^{\tau_2}.T^\delta], \sigma, [x^\rho.S^\delta], \cud{\alpha}
 \rbrace$ and the omitted type of term $N$ is appropriate for every eliminator $E$.
\end{definition}

\begin{definition}\rm Helper translation $\at$
 \begin{align*}
  R \at K & = \lambda m^{\trf{\sigma} \ra \trf{\rho}} . m \trf{R} K \\
  \prl \at K & = \lambda m^{(\trf{\tau_1}\ra\trf{\tau_2}\ra\bot)\ra\bot} . m (\lambda a\,^\trf{\tau_1} \,b\,^\trf{\tau_2} . a K) \\
  \prr \at K & = \lambda m^{(\trf{\tau_1}\ra\trf{\tau_2}\ra\bot)\ra\bot} . m (\lambda a\,^\trf{\tau_1} \,b\,^\trf{\tau_2} . b K) \\
  \begin{split}
   [x^{\tau_1}.S^\delta, y^{\tau_2}.T^\delta] \at K & = \lambda m^{(\trf{\tau_1}\ra\bot)\ra(\trf{\tau_2}\ra\bot)\ra\bot} . \\ & \quad m (\lambda x\,^\trf{\tau_1}. (S\Ra K))(\lambda y\,^\trf{\tau_2}. (T\Ra K))
  \end{split} \\
  \sigma \at K & = \lambda m^{\forall p \trf{\rho}} . m \trf{\sigma} K \\
  [x^\rho.S^\delta] \at K & = \lambda m^{(\forall p(\trf{\rho}\ra\bot))\ra\bot} . m (\mathrm\Lambda p \lambda x\,^\trf{\rho}. (S \Ra K)) \\
  \cud{\alpha} \at K & = \lambda m^\bot . m
 \end{align*}
\end{definition}

\begin{lemma} \emph{[Soundness]}
\label{trsoundf} If a term $M$ has type $\delta$, then $\trf{M}$ has type $\trf{\delta}$.
\end{lemma}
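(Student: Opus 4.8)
The plan is to prove soundness by structural induction on the term $M$, establishing simultaneously the two statements that make the recursion go through: namely that if $M$ has type $\tau$, then $M \Ra K$ has type $\bot$ whenever $K$ has type $\tau^\ast \ra \bot$, and that if $M$ has type $\tau$ then $\trf{M}$ has type $\trf{\tau} = (\tau^\ast \ra \bot) \ra \bot$. The second statement follows immediately from the first by the definition $\trf{M} = \lambda k^{\tau^\ast\ra\bot}.(M \Ra k)$, so the real content is the claim about $\Ra$. I would in fact strengthen the induction further to cover the auxiliary translation $\at$ at the same time, proving that if $E$ is an eliminator taking a term of type $\rho$ to a result of type $\tau$, and $K$ has type $\tau^\ast \ra \bot$, then $E \at K$ has type $\rho^\ast \ra \bot$. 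These three assertions are mutually recursive, exactly mirroring the mutual recursion in the definitions of $\Ra$ and $\at$, so they must be proved together.

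First I would treat the introduction cases of $\Ra$. Each of these has the shape $M \Ra K = K\,C$ where $C$ is an explicitly given constructor term; here one checks that $C$ has type $\tau^\ast$ (reading off the helper translation ${}^\ast$ for the appropriate connective) so that $K\,C : \bot$. For instance in the pairing case one verifies that $\lambda p^{\trf{\tau_1}\ra\trf{\tau_2}\ra\bot}.p\,\trf{N_1}\,\trf{N_2}$ has type $(\trf{\tau_1}\ra\trf{\tau_2}\ra\bot)\ra\bot = (\tau_1\wedge\tau_2)^\ast$, which uses the inductive hypothesis that $\trf{N_1}:\trf{\tau_1}$ and $\trf{N_2}:\trf{\tau_2}$. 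The universal and existential introductions are handled the same way, reading $(\forall p\,\tau)^\ast$ and $(\exists p\,\tau)^\ast$ from the definition; the existential case requires checking that the type substitution $\rho[p:=\sigma]$ interacts correctly with $\trf{\cdot}$ and ${}^\ast$.

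Next comes the single elimination clause (\ref{elim}), $NE \Ra K = N \Ra (E \at K)$. Given that $NE$ has type $\tau$, the subterm $N$ has some type $\rho$ and $E$ is an eliminator sending $\rho$ to $\tau$; by the $\at$ part of the induction hypothesis $E \at K : \rho^\ast \ra \bot$, and then by the $\Ra$ part $N \Ra (E \at K) : \bot$. The clauses of $\at$ are verified by direct inspection against the definition of $\at$ and the ${}^\ast$ translation: one checks for each eliminator that the explicitly written $\lambda m$-abstraction has domain $\rho^\ast$, so that $m$ can be applied and the body typechecks to $\bot$, using the nested hypotheses on $S \Ra K$, $T \Ra K$ where they appear. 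I anticipate no deep obstacle, since soundness of a CPS translation is essentially a bookkeeping exercise; the only care required is in the polymorphic and existential clauses, where one must confirm that the type equalities $(\tau[p:=\sigma])^\ast$ versus $\tau^\ast[p:=\sigma]$ and the corresponding facts for $\trf{\cdot}$ hold, so that substitution commutes with the translation. That commutation lemma for types is the one auxiliary fact I would isolate and prove first, and it is the part most likely to need explicit verification rather than being merely routine.
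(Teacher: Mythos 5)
Your proposal is correct, and it fills in exactly the argument the paper leaves implicit: the paper's own proof of this lemma is just the word ``Easy,'' so the intended argument is precisely the mutual structural induction you describe, with the strengthened statements for $M \Ra K$ and $E \at K$ mirroring the mutual recursion of the two helper translations. The one auxiliary fact you rightly isolate --- that type substitution commutes with the translations, needed in the $\forall$-elimination and $\exists$-introduction cases --- is exactly clause $\trf{\tau}[p:=\trf{\rho}] =_\alpha \trf{\tau[p:=\rho]}$ of the paper's Lemma~\ref{substr}, so your instinct to prove it first is sound (the paper happens to state that lemma only afterwards).
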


\begin{proof}
Easy.
\qed
\end{proof}

\begin{lemma} \emph{[Properties of substitution]} For a term $R$ and 
any term $K$ and for any types $\tau$ and $\rho$ the 
following holds:
\label{substr}
 \begin{align}
  \trf{R}[x^\trf \delta:=\trf N^ \trf \delta] &=_{\alpha} \trf{R[x:=N]}; \\
  (R \Ra K)[x ^ \trf\delta:=\trf N ^ \trf\delta] &=_{\alpha} R [x := N] \Ra K [x := N]; \label{podstterm} \\
  (R \at K)[x ^ \trf\delta := \trf N ^ \trf\delta] &=_{\alpha} R[x := N] \at K[x := N]
   \mbox{\textnormal{ if $R$ is an eliminator}}; \\
  \trf\tau [p:=\trf\rho] &=_\alpha \trf{\tau[p:=\rho]}; \\
  (R \Ra K)[p:=\trf\rho] &=_\alpha R[p:=\rho] \Ra K[p:=\rho]; \label{podsttyp}\\
  (R \at K)[p:=\trf\rho] &=_\alpha R[p:=\rho] \at K[p:=\rho] \mbox{\textnormal{ if $R$ is an eliminator}}.
 \end{align}
\end{lemma}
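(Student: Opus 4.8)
The plan is to establish all six identities by structural induction, separating the cases of term substitution and type substitution and, within each, carrying the relevant statements \emph{simultaneously} in the induction hypothesis. Simultaneity is forced by the mutual recursion of the three translations: the elimination clause $NE\Ra K = N\Ra(E\at K)$ couples $\Ra$ with $\at$, while every introduction clause of $\Ra$ refers back to $\trf{\cdot}$. I would also note at the outset that the two ``$\trf{\cdot}$'' identities are not independent: since $\trf R=\lambda k.(R\Ra k)$ with $k$ fresh, the first identity follows from \eqref{podstterm} by pushing the substitution under the binder $\lambda k$, and likewise the (unlisted but needed) fact $\trf R[p:=\trf\rho]=_\alpha\trf{R[p:=\rho]}$ follows from \eqref{podsttyp}.

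For the term-substitution identities I would run one induction on the structure of $R$, proving the $\trf{\cdot}$-, $\Ra$- and $\at$-statements together. The introduction cases ($\lambda$, pairing, $\inl$, $\inr$, $\Lambda$, and the existential introduction $[N,\sigma]$) each unfold the defining clause, apply the $\trf{\cdot}$-hypothesis to the immediate subterms, and reassemble; the witness type $\sigma$ is inert under a term substitution. The elimination case $R=NE$ rewrites to $N\Ra(E\at K)$ and invokes the $\Ra$-hypothesis on $N$ and the $\at$-hypothesis on $E$ before folding back, so it is here that all three statements are needed at once. The subordinate induction for the $\at$-identity proceeds by cases on the eliminator: the application-argument eliminator appeals to the $\trf{\cdot}$-hypothesis and the case- and existential-branch eliminators to the $\Ra$-hypothesis on their subterms, whereas $\prl$, $\prr$, a type argument $\sigma$ and $\cud{\alpha}$ carry no term subterms and are immediate.

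For the type-substitution identities I would first dispatch the purely type-theoretic statement. Unfolding $\trf\tau=(\tau^\ast\ra\bot)\ra\bot$ reduces it to the companion identity $\tau^\ast[p:=\trf\rho]=_\alpha(\tau[p:=\rho])^\ast$ for the helper translation ${}^\ast$, which I would prove by induction on $\tau$ following the clauses of ${}^\ast$; the $\forall$ and $\exists$ clauses loop back through $\trf{\cdot}$, so again the two are proved together, the bound type variable being chosen distinct from $p$ and from the free variables of $\rho$. With the type identity available, the $\Ra$- and $\at$-identities under type substitution follow by the same induction on $R$, now invoking the type identity on each $\trf{\cdot}$-annotation and on every type argument that occurs, in particular in the existential introduction and in a type-application eliminator; the freshness convention covers the $\Lambda$, $\forall$ and $\exists$ binders encountered along the way.

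The step I expect to be the real obstacle is the variable base case of \eqref{podstterm}. There $x\Ra K=xK$, so substituting $\trf N$ for $x$ produces $\trf N\,(K[x:=\trf N])$, whereas the right-hand side is $N\Ra(K[x:=\trf N])$; since $\trf N=\lambda k.(N\Ra k)$, the two sides differ by exactly one administrative $\beta$-redex. This is the single point at which the bare $=_\alpha$ is delicate and must be read modulo that evident reduction, and it is also what controls the variable case of the first identity. Every other case is a routine unfolding followed by an appeal to the induction hypotheses and, in the type-substitution half, to the type identity.
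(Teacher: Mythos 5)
Your overall plan---one simultaneous induction carrying the $\trf{\cdot}$-, $\Ra$- and $\at$-statements together, with the two $\trf{\cdot}$-identities obtained from the $\Ra$-identities by pushing the substitution under the fresh binder $\lambda k$---is exactly what the paper intends: its own proof is the single sentence that the lemma follows by simultaneous induction on substitution, so your write-up is strictly more informative. Your diagnosis of the variable case of (\ref{podstterm}) is also correct, and it exposes a defect of the \emph{statement} rather than of your argument: writing $K'$ for $K[x:=\trf{N}]$, the left side gives $(x\Ra K)[x:=\trf{N}]=\trf{N}\,K'=(\lambda k.(N\Ra k))K'$ while the right side is $N\Ra K'$, and these are one administrative $\beta$-redex apart, not $\alpha$-equivalent. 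So the second (and hence the first) identity holds only as $\lhs\rabeta\rhs$. This weakening is harmless downstream: Lemma~\ref{rbetaf} concludes $\trf{R}\rabeta^{+}\trf{R'}$, so extra $\beta$-steps in the right direction cost nothing, and Lemma~\ref{rcmtf} never invokes term substitution.

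What you missed is the exactly analogous failure in the type half, and there it breaks the induction you propose at its base case. Your companion identity $\tau^{\ast}[p:=\trf{\rho}]=_\alpha(\tau[p:=\rho])^{\ast}$ is already false for $\tau=p$: the left side is $\trf{\rho}=(\rho^{\ast}\ra\bot)\ra\bot$, the right side is $\rho^{\ast}$, and no $\alpha$-renaming (nor any reduction) removes the extra double-negation wrapper. Consequently the fourth identity itself fails for $\tau=p$, since $\trf{p}[p:=\trf{\rho}]=(\trf{\rho}\ra\bot)\ra\bot$ whereas $\trf{p[p:=\rho]}=\trf{\rho}$. The identity that is true, and on which the induction closes, is $\trf{\tau}[p:=\rho^{\ast}]=_\alpha\trf{\tau[p:=\rho]}$, equivalently $\tau^{\ast}[p:=\rho^{\ast}]=_\alpha(\tau[p:=\rho])^{\ast}$. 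Adopting it forces a matching correction elsewhere: the type arguments occurring in the clauses for $[N^{\rho[p:=\sigma]},\sigma]\Ra K$ and for $\sigma\at K$ must be $\sigma^{\ast}$ rather than $\trf{\sigma}$ (this is also what the soundness Lemma~\ref{trsoundf} needs in order to type-check $u\,\sigma^{\ast}\,\trf{N}$ and $m\,\sigma^{\ast}\,K$), and the substitution in (\ref{podsttyp}) should likewise read $[p:=\rho^{\ast}]$. With those repairs your induction, including the $\forall$ and $\exists$ cases that loop back through $\trf{\cdot}$, goes through as you describe; without them, the type-substitution half of your proof has a base case that cannot be established.
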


\begin{proof} This lemma is proved by simultaneous induction on the definition of substitution. 
 \qed
\end{proof}

\begin{lemma} \label{rbetaf} If $R \rabeta R'$, then $\trf{R} \rabeta^{+} \trf{R'}$.
\end{lemma}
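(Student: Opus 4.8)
The plan is to proceed by case analysis on the $\beta$-reduction rule $R \rabeta R'$, exactly as in the simply-typed Lemma~\ref{betaeta}, but now using the CPS machinery. The key technical fact I will lean on throughout is the substitution Lemma~\ref{substr}: the whole point of that lemma is that the CPS translation commutes with both term substitution and type substitution up to $\alpha$-equivalence, which is precisely what is needed to show that the redex and its contractum have translations related by reduction. Recall that by definition $\trf{R} = \lambda k.(R \Ra k)$, so it suffices in each case to show $R \Ra K \rabeta^{+} R' \Ra K$ for an arbitrary continuation $K$; the outer $\lambda k$ then carries this through to $\trf{R} \rabeta^{+} \trf{R'}$.

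First I would treat the introduction-against-eliminator $\beta$-redexes one at a time. Consider the implication case $R = (\lambda x^\tau.M)A \rabeta M[x:=A] = R'$. Unfolding, $R \Ra K = (\lambda x^\tau.M) \Ra (A \at K)$, which by the abstraction clause equals $(A \at K)(\lambda x^\trf\tau.\trf M)$, and this in turn is $(\lambda m.m\,\trf A\,K)(\lambda x.\trf M)$. One $\beta$-step gives $(\lambda x^\trf\tau.\trf M)\trf A\,K$, and a second $\beta$-step gives $\trf M[x:=\trf A]\,K$. By clause~(\ref{podstterm}) of Lemma~\ref{substr} this is $\alpha$-equal to $(M[x:=A]) \Ra K = R' \Ra K$, as required. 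The pairing/projection cases $\para{M,N}\prl \rabeta M$, the injection/case cases, the existential-elimination rule~(\ref{elimegz}), and the $\forall$-instantiation rule $(\Lambda p\,M)\sigma \rabeta M[p:=\sigma]$ all follow the same two-or-three-step pattern: unfold $\Ra$ and $\at$, perform the administrative $\beta$-reductions exposed by the CPS encoding, and close the gap with the appropriate substitution clause of Lemma~\ref{substr}. The rule~(\ref{elimegz}) and the $\forall$ rule are the ones where \emph{type} substitution appears, so there I invoke clauses~(\ref{podsttyp}) and its analogue rather than the term-substitution clause; this is the only subtlety distinguishing these cases from the implication case.

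The main obstacle I anticipate is bookkeeping rather than conceptual. The CPS translation inserts several administrative redexes, so I must check in each case that the number of genuine $\beta$-steps on the source side is matched by at least one $\beta$-step on the target side — this is what the superscript $+$ in $\rabeta^{+}$ demands, and it is what makes the final strong-normalization argument go through (a single source step must not translate to zero target steps). I would verify that every introduction-elimination clause produces at least one honest $\beta$-contraction after the administrative reductions settle, which it does because each eliminator clause of $\at$ begins with $\lambda m.m(\cdots)$ and is immediately applied. The delicate point to state carefully is that the reductions happen \emph{inside} an arbitrary context $K$ and under the binders introduced by the translation, so I should note at the outset that $\rabeta$ is closed under all term-forming operations, allowing the computed reductions at the head to be lifted to $R \Ra K \rabeta^{+} R' \Ra K$ and thence to the $\lambda k$-abstracted translations. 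I would present two representative cases in full (say, the implication $\beta$-rule and the existential-elimination rule~(\ref{elimegz}), since they exhibit the term- and type-substitution flavours respectively) and remark that the remaining cases are analogous.
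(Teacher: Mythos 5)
Your proposal is correct and follows essentially the same route as the paper: case analysis on the seven $\beta$-rules, unfolding $\Ra$ and $\at$, contracting the administrative redexes exposed by the CPS encoding, and closing with Lemma~\ref{substr} (the paper merely picks rule~(\ref{elimegz}) as its worked example where you pick the implication rule). One small inaccuracy worth fixing: in your implication case, $\trf{M}[x:=\trf{A}]\,K$ is not $\alpha$-equal to $M[x:=A]\Ra K$ via clause~(\ref{podstterm}) --- by the first clause of Lemma~\ref{substr} it equals $\trf{M[x:=A]}\,K$, which needs one further administrative $\beta$-step to reach $M[x:=A]\Ra K$; this is harmless since the conclusion only requires $\rabeta^{+}$.
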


\begin{proof}
 Using induction on the definition of $\rabeta$ we have 7 cases. For example, consider (\ref{elimegz}),
where $R=[M^{\tau[p := \sigma]}, \sigma][x^\tau.N^\delta]$ and $R' = (N[p:=\sigma][x := M])^\delta$.
 \begin{align*}
 (\ref{elimegz}) \quad \trf R & = \lambda k . (\lambda m^{(\exists p \tau)^\ast} . m
    (\mathrm\Lambda p \lambda x^\tau. (N \Ra k))) (\lambda u^{\forall p (\trf\tau\ra\bot)} . u \trf \sigma\, \trf M) \\
  & \rabeta \lambda k . (\lambda u . u \trf \sigma \, \trf M) (\mathrm\Lambda p \lambda x. (N \Ra k)) \\
  & \rabeta \lambda k . (\mathrm\Lambda p \lambda x. (N \Ra k)) \trf \sigma \,\trf M \\
  & \rabeta \lambda k . (\lambda x. (N \Ra k))[p:=\trf\sigma] \,\trf M \\
  & \rabeta \lambda k . (\lambda x. (N [p:=\sigma]\Ra k)) \trf M \mbox{\quad (from (\ref{podsttyp}))} \\
  & \rabeta \lambda k . (N [p:=\sigma]\Ra k) [x:=\trf M] \\
  & =_\alpha \lambda k . (N [p:=\sigma][x:=M] \Ra k) \quad \mbox{(from (\ref{podstterm}))} \\
  & = \trf {R'}
 \end{align*}
\qed
\end{proof}

\begin{lemma} \label{rcmtf} If $R \cmt R'$, then $\trf R =_\alpha \trf R'$.
\end{lemma}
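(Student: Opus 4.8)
The plan is to prove, by induction on the position of the contracted redex, the strengthened statement that for every continuation $K$ one has $R \Ra K =_\alpha R' \Ra K$ whenever $R \cmt R'$, together with $E \at K =_\alpha E' \at K$ whenever a commutative redex is contracted inside an eliminator $E \cmt E'$. Since $\trf R = \lambda k.(R \Ra k)$, specialising to $K = k$ then yields $\trf R =_\alpha \trf R'$ at once. The congruence cases, in which the redex sits strictly inside a subterm, I expect to dispatch routinely: every defining clause of $\Ra$ and $\at$ produces its output from the $\Ra$-, $\at$- and $\trf{\cdot}$-translations of the immediate subterms together with the continuation, and a short preliminary check shows that both $\Ra$ and $\at$ respect $=_\alpha$ in their continuation argument; hence $=_\alpha$ of the subterm translations propagates to the whole. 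All the genuine content therefore lies in the top-level redexes.

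Next I would observe that every top-level commutative redex matches one of the three schemes \ref{cmts-pat1}, \ref{cmts-pat2}, \ref{cmts-pat3} with $E$ an arbitrary eliminator, and treat these uniformly. For \ref{cmts-pat1}, namely $(W[x.S,y.T])E \cmt W[x.SE,y.TE]$, I apply clause \ref{elim} twice on the left, folding both eliminators into the continuation,
\[ (W[x.S,y.T])E \Ra K = W \Ra \bigl([x.S,y.T] \at (E \at K)\bigr), \]
and once inside each branch on the right, using $SE \Ra K = S \Ra (E \at K)$ and the same for $T$. Both computations deliver the \emph{identical} term $W \Ra \bigl(\lambda m.\, m\,(\lambda x.(S \Ra (E \at K)))\,(\lambda y.(T \Ra (E \at K)))\bigr)$, so the two sides coincide up to the bound name $m$. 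Scheme \ref{cmts-pat3}, $(M[x.P])E \cmt M[x.PE]$, is handled in the same way: both sides equal $M \Ra \bigl(\lambda m.\, m\,(\mathrm\Lambda p\,\lambda x.(P \Ra (E \at K)))\bigr)$.

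For the ex falso scheme \ref{cmts-pat2}, instantiating to rules of the shape $(A^\bot \cud{\mu}) E \cmt A^\bot \cud{\nu}$, the decisive point is that the ex falso continuation transformer is constant: by definition $\cud{\alpha} \at K = \lambda m^\bot.m$ discards $K$ outright. Thus the left-hand side yields $A \Ra (\cud{\mu} \at (E \at K)) = A \Ra (\lambda m^\bot.m)$ and the right-hand side yields $A \Ra (\cud{\nu} \at K) = A \Ra (\lambda m^\bot.m)$, and these agree, notably even though the two occurrences of the ex falso operator carry different type indices $\mu$ and $\nu$, which the translation simply ignores. The argument is in essence pure bookkeeping, and the single point I would isolate cleanly is that clause \ref{elim} makes a tower of successive eliminations fold associatively into one continuation; this associativity of continuation composition is precisely what the permutative conversions encode, so they become invisible up to $=_\alpha$. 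In contrast to the $\beta$-case of Lemma \ref{rbetaf}, no appeal to the substitution Lemma \ref{substr} is needed here, since commutative reductions carry out no substitution.
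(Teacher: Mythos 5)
Your proposal is correct and follows essentially the same route as the paper: both compute the two sides by folding eliminators into the continuation via clause (\ref{elim}) and the $\at$-clauses until the translations coincide syntactically, as in the paper's worked cases (\ref{Eelimelim}) and (\ref{Aelimcud}). You merely organize the 21 cases more economically by the three schemes (\ref{cmts-pat1})--(\ref{cmts-pat3}) with a generic eliminator $E$, and you make the congruence-closure step explicit where the paper leaves it implicit.
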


\begin{proof}
The complete proof consists of 21 cases. Here, two interesting commutations will be elaborated. 
 The other cases are similar and left to the reader.

 From (\ref{Eelimelim}) we get
 \begin{align*}
  \begin{split}
   \trf{\lhs} = & \lambda k . (M[y.P] \Ra ([x.N] \at k)) = \lambda k. (M \Ra ([y.P] \at ([x.N] \at k))) \\
    = & \lambda k . (M \Ra (\lambda m .m (\mathrm\Lambda p\lambda y.(P \Ra [x.N]\at k))))
  \end{split} \\
  \begin{split}
   \trf{\rhs} = & \lambda k. (M \Ra ([y. P[x.N]] \at k)) = \lambda k. (M \Ra (\lambda m.m(\mathrm\Lambda p \lambda y. (P[x.N] \Ra k)))) \\
    = & \lambda k . (M \Ra (\lambda m .m (\mathrm\Lambda p\lambda y.(P \Ra [x.N]\at k))))
  \end{split}
 \end{align*}
 
 \noindent From (\ref{Aelimcud}) we get
 \begin{align*}
  \begin{split}
   \trf{\lhs} = & \lambda k . (A[x.N] \Ra (\cud{\sigma} \at k)) = \lambda k. (A[x.N] \Ra (\cud{\sigma} \at k)) \\
    = & \lambda k. (A \Ra ([x.N] \at (\cud{\sigma} \at k))) \\
    = & \lambda k . (A \Ra (\lambda m . m (\mathrm\Lambda p \lambda x . (N \Ra (\cud{\sigma} \at k)))))
  \end{split} \\
  \begin{split}
   \trf{\rhs} = & \lambda k . (A \Ra ([x.N\cud{\sigma}] \at k)) = \lambda k. (A \Ra (\lambda m . m(\mathrm\Lambda p \lambda x.(N \cud{\sigma} \Ra k)))) \\
   = & \lambda k . (A \Ra (\lambda m . m (\mathrm\Lambda p \lambda x . (N \Ra (\cud{\sigma} \at k)))))
  \end{split}
 \end{align*}
\qed
\end{proof}

\begin{lemma} \label{cmtsteps} Every sequence of commutative reductions
in $\Fiakf$ must terminate.
\end{lemma}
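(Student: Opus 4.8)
The plan is to exhibit a strictly monotone interpretation of terms by natural numbers that every commutative reduction strictly decreases, so that well-foundedness of $\mathbb{N}$ rules out infinite $\cmt$-sequences. Note that the CPS translation is useless here: by Lemma~\ref{rcmtf} each commutative step collapses to an identity, so $\trf{\cdot}$ is blind to $\cmt$ and a dedicated measure is needed. The whole difficulty is that the disjunction rule (pattern~(\ref{cmts-pat1})) \emph{duplicates} the pushed eliminator into both branches, so a naive size measure can grow; the remedy is to make elimination behave \emph{multiplicatively} while the two branches of a case contribute an \emph{additive} surplus, exactly as in the textbook termination argument for distributivity $a(b+c)\to ab+ac$.

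Concretely I would define a weight $w(M)\in\mathbb{N}$ with $w(M)\ge 2$ by recursion on $M$. Variables get weight $2$; every introduction (abstraction, pairing, injection, existential and universal introduction) takes the sum of the weights of its immediate subterms and adds $1$. Writing $M\,E$ for the elimination of a major premise $M$ by an eliminator $E$, I set $w(M\,E)=w(M)\cdot w_e(E)$, where the eliminator weights are
\[
 w_e(N)=w(N)+1,\quad w_e(\prl)=w_e(\prr)=w_e(\sigma)=w_e(\cud{\alpha})=2,
\]
\[
 w_e([x.P])=w(P)+1,\qquad w_e([x.S,y.T])=w(S)+w(T)+1 .
\]
All these values are $\ge 2$, and since multiplying by a factor $\ge 1$ and adding are strictly increasing, $w$ is strictly monotone with respect to the subterm order: replacing any subterm by one of strictly smaller weight strictly decreases $w$ of the whole term. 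This monotonicity is what will let me reduce the global claim to the three reduction schemes.

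Next I would check the three patterns, writing $u=w(W)$, $s=w(S)$, $t=w(T)$, $m=w(M)$ and $e=w_e(E)\ge 2$. For the duplicating disjunction rule (\ref{cmts-pat1}) the left side has weight $u(s+t+1)e$ while the right side has $u(se+te+1)$, so the gap is $u(e-1)\ge 2>0$ — the crucial point is that the additive $+1$ of the branch eliminator is multiplied by $e$ on the left but only by $1$ on the right. For the existential rule (\ref{cmts-pat3}) the analogous computation gives a gap $m(e-1)>0$, and for the \emph{ex falso} rule (\ref{cmts-pat2}) the eliminator is simply discarded, dropping the weight by the factor $e\ge 2$. In each case the surplus is strictly positive precisely because every weight is $\ge 2$.

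The step that needs care is organizational rather than arithmetical: I must confirm that each of the twenty-one concrete rules of Definitions~\ref{cmts} and~\ref{cmtf} is an instance of one of the three schemes (\ref{cmts-pat1})–(\ref{cmts-pat3}) and that $w_e$ is defined uniformly for every shape of $E$ (term, projection, case, type, existential eliminator, $\cud{}$). Once that bookkeeping is in place, strict monotonicity propagates each of the three gaps from the redex to its enclosing context, so every $R\cmt R'$ gives $w(R)>w(R')$, and the termination of commutative reductions in $\Fiakf$ follows from well-foundedness of $\mathbb{N}$. I expect the verification of the duplicating pattern~(\ref{cmts-pat1}) to be the conceptual crux, since it is the only scheme under which the measure could \emph{a priori} increase, and it is the reason a purely additive size cannot work.
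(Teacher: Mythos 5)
Your proposal is correct and follows essentially the same route as the paper: a strictly decreasing positive integer measure, organized around the three commutation patterns (\ref{cmts-pat1})--(\ref{cmts-pat3}), closed under contexts by strict monotonicity, with the disjunction/duplication case as the crux. The only difference is the concrete polynomial --- the paper squares the major premise ($\chi(FA)=\chi(F)^2\chi(A)$, $\chi(W[x.S,y.T])=\chi(W)^2(\chi(S)+\chi(T))+1$) and verifies the decrease by inequality chains, whereas your multiplicative eliminator weights give exact gaps $u(e-1)$, $m(e-1)$; both interpretations check out.
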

\begin{proof}
 To prove this lemma we define such a measure $\chi(M) > 0$, that for any commutation $M\cmt M'$, we have $\chi(M) > \chi(M')$. Please note, that we have 3 patterns of commutative reductions in Rules (\ref{cmts-pat1}), (\ref{cmts-pat2}) and (\ref{cmts-pat3}). We use those patters to define appropriate conditions for measure $\chi$:
 \begin{align}
   \chi\left((W[x.S,y.T])E\right) &> \chi\left(W[x.SE,y.TE]\right) \label{chipat1} \\
   \chi\left((A\cud{{}})E\right) &> \chi\left(A\cud{{}}\right) \label{chipat2} \\
   \chi\left((N[x.P])E \right) &> \chi\left(N[x.PE]\right) \label{chipat3} \\
   \chi(M) &\geq 1 \nonumber
 \end{align}
 Now we give the definition
of the function $\chi(M)$;
 it is similar to de Groote's norm $| \cdot |$ from \cite{groote99} but simpler:
 \begin{align*}
  & \chi(x) = 1 \\
  & \chi(\lambda x . N) = \chi (\inl N) = \chi (\inr N) = \chi (N), \quad \chi(\para{M_1,M_2}) = \chi(M_1) + \chi (M_2) \\
  & \chi(F A) = \chi(F)^2 \chi(A), \quad \chi(P \prl) = \chi (P \prr) = \chi(P)^2, \quad \chi(N \sigma) = \chi(N)^2 \\
  & \chi(W[x.S,y.T]) = \chi(W)^2(\chi(S)+\chi(T)) + 1 \quad \chi(N[x.P]) = \chi(N)^2\chi(P) + 1 \\
  & \chi(A \cud{{}}) = \chi(A)^2 + 1 \\
 \end{align*}
 \noindent 
There are 21 easy cases, one for each permutation from Definitions \ref{cmts} and \ref{cmtf}. We will show here one example case for each pattern mentioned above. \medskip

\noindent
 (\ref{chipat1}) Let $l=\chi((W[x.S,y.T])[a.A,b.B])$ and $r=\chi(W[x.S[a.A,b.B],y.T[a.A,b.B]])$.
{\allowdisplaybreaks
 \begin{align*}
  l & =\chi(W[x.S,y.T])^2(\chi(A)+\chi(B)) + 1 \\
    & = \left( \chi(W)^2(\chi(S)+\chi(T)) + 1 \right)^2(\chi(A)+\chi(B)) + 1 \\
    & > \left( \chi(W)^2(\chi(S)+\chi(T)) \right)^2(\chi(A)+\chi(B)) + 1 \\
    & = \chi(W)^4\left((\chi(S)^2+\chi(T)^2) (\chi(A)+\chi(B)) + 2(\chi(S)\chi(T))(\chi(A)+\chi(B))\right) + 1\\
    & > \chi(W)^4((\chi(S)^2+\chi(T)^2) (\chi(A)+\chi(B)) + 2) + 1\\
  r & = \chi(W)^2 (\chi(S[a.A,b.B]) + \chi(T[a.A,b.B])) + 1\\
    & = \chi(W)^2 (\chi(S)^2(\chi(A)+\chi(B)) + 1 + \chi(T)^2(\chi(A)+\chi(B)) + 1) + 1 \\
    & = \chi(W)^2 ((\chi(S)^2+\chi(T)^2)(\chi(A)+\chi(B)) + 2) + 1 \\
   l & > r
 \end{align*}
}
 
 \noindent
(\ref{chipat2}) Let $l=\chi((A\cud{\bot})\cud{\sigma})$ and $r=\chi(A\cud{\sigma})$.
 \begin{align*}
  l & = \chi(A\cud{\bot})^2 + 1 = (\chi(A)^2 + 1)^2 + 1 = \chi(A)^4 + 2\chi(A)^2 + 2 \\
  r & = \chi(A)^2 + 1 \\
  l & > r
 \end{align*}
 \noindent
(\ref{chipat3}) Let $l=\chi((N[x.P])[a.A,b.B])$ and $r=\chi(N[x.P[a.A,b.B]])$.
{\allowdisplaybreaks
 \begin{align*}
  l & = \chi(N[x.P])^2 (\chi(A) + \chi(B)) + 1 \\
    & = \left( \chi(N)^2\chi(P) + 1 \right)^2 (\chi(A) + \chi(B)) + 1 \\
    & = (\chi(N)^4\chi(P)^2 + 2\chi(N)^2\chi(P) + 1)(\chi(A) + \chi(B)) + 1 \\
    & = \chi(N)^4\chi(P)^2(\chi(A) + \chi(B)) + \chi(N)^2(2\chi(P)(\chi(A) + \chi(B))) \\
    & \quad + \chi(A) + \chi(B) + 1 \\
  r & = \chi(N)^2\chi(P[a.A,b.B]) + 1 \\
    & = \chi(N)^2(\chi(P)^2(\chi(A) + \chi(B)) + 1) + 1 \\
    & = \chi(N)^2\chi(P)^2(\chi(A) + \chi(B)) + \chi(N)^2+ 1 \\
  l & > r
 \end{align*}
}
 \qed
\end{proof}

\begin{theorem} The calculus $\Fiakf$  is strongly normalizing.
\end{theorem}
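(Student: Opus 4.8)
The plan is to assume, for contradiction, the existence of an infinite reduction sequence in $\Fiakf$ and to push it through the CPS translation $\trf{\cdot}$, obtaining an infinite reduction in the base system $\Fi$, whose strong normalization we take as known (Girard). The three lemmas just proved are exactly the ingredients: Lemma~\ref{rbetaf} shows that $\rabeta$-steps translate to \emph{genuine} (nonempty) reductions, Lemma~\ref{rcmtf} shows that commutative steps become invisible after translation (equal up to $\alpha$), and Lemma~\ref{cmtsteps} forbids infinite purely commutative sequences.

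First I would suppose that some well-typed term $M_0$ admits an infinite reduction
\[ M_0 \to M_1 \to M_2 \to \cdots, \]
in which each step is either a $\rabeta$-step or a $\cmt$-step. The crucial intermediate claim is that this sequence must contain \emph{infinitely many} $\rabeta$-steps. Indeed, were there only finitely many of them, let $M_N \to M_{N+1}$ be the last $\rabeta$-step; then the tail $M_{N+1} \cmt M_{N+2} \cmt \cdots$ would be an infinite sequence of commutative reductions, contradicting Lemma~\ref{cmtsteps}. Hence infinitely many indices $i$ satisfy $M_i \rabeta M_{i+1}$.

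Next I would apply $\trf{\cdot}$ to the entire sequence. By Lemma~\ref{rcmtf}, each commutative step gives $\trf{M_i} =_\alpha \trf{M_{i+1}}$, while by Lemma~\ref{rbetaf} each $\rabeta$-step gives $\trf{M_i} \rabeta^{+} \trf{M_{i+1}}$, contributing at least one real reduction. Concatenating these (and working modulo $\alpha$-equivalence) yields an infinite $\rabeta$-reduction of $\trf{M_0}$ in $\Fi$ with infinitely many actual steps, since infinitely many of the segments are strict. By the Soundness Lemma~\ref{trsoundf} the term $\trf{M_0}$ is well typed in $\Fi$, so this contradicts the strong normalization of $\Fi$ and completes the argument.

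The main obstacle is precisely the intermediate claim: commutative reductions must not be allowed to ``stall'' the translation argument. Lemma~\ref{rbetaf} on its own is powerless against a reduction that keeps firing commutative rules, because those leave the translation fixed up to $\alpha$; it is Lemma~\ref{cmtsteps}, and the measure $\chi$ behind it, that rules out such a stall and guarantees that infinitely many $\beta$-steps survive the passage to $\Fi$. Everything else is routine bookkeeping over the reduction sequence.
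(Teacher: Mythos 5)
Your proof is correct and follows essentially the same route as the paper: the dichotomy between infinitely many $\beta$-steps (pushed through the translation via Lemmas~\ref{rbetaf} and~\ref{rcmtf} to contradict SN of $\Fi$) and an eventually all-commutative tail (ruled out by Lemma~\ref{cmtsteps}). Your write-up in fact spells out the bookkeeping that the paper's two-line argument leaves implicit.
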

\begin{proof}
 Suppose that 
 \[ M^\tau = M_0^\tau \ra M_1^\tau \ra M_2^\tau \ra \cdots \]
If there is infinitely many $\beta$-reductions in the sequence above then
we have an infinite reduction in $\Fi$. If almost all reduction steps
are of type $\cmt$ then we use Lemma \ref{cmtsteps}. In both cases we reach contradiction.
 \qed
\end{proof}

\section{Summary}

We have presented a short proofs of strong normalization for simply-typed and polymorphic $\lambda$-calculus with all connectives. Syntax-driven translations used in those proofs allow to reduce the SN property problem to calculi with less number of connectives. 

The CPS-translation used for polymorphic case looks may be helpful dealing with higher level $\lambda$-calculus such as $\mathbf F_\omega$. This is our next research problem.

\bibliographystyle{plain}
\bibliography{paper}

\end{document}